\newenvironment{proposition}[1][s]
{\vskip1mm \ifthenelse{\equal{#1}{s}}{{\small \sc
\langu{Proposition. }{Предложение. }}} {{\small \sc
\langu{Proposition #1.}{Предложение #1.}}}}{}
\begin{document}

\udk{531.38}

\title{ОБЛАСТИ СУЩЕСТВОВАНИЯ КРИТИЧЕСКИХ ДВИЖЕНИЙ\\
ОБОБЩЕННОГО ВОЛЧКА КОВАЛЕВСКОЙ\\И БИФУРКАЦИОННЫЕ ДИАГРАММЫ}
      {Области существования критических движений обобщенного волчка Ковалевской}

\author{М.П.~Харламов}

\date{11.04.06}

\address{Волгоградская академия государственной службы, Волгоград}

\email{mharlamov@vags.ru}

\maketitle

\begin{abstract}
Работа завершает цикл исследований бифуркационных диаграмм
гамильтоновой системы с тремя степенями свободы, описывающей
движение осесимметричного твердого тела с условиями типа Ковалевской
в двойном силовом поле. Получены явные неравенства, определяющие
множество критических значений первых интегралов на поверхностях,
несущих бифуркационную диаграмму (Харламов М.П., МТТ, 2004,
вып.~34). Выполнено построение всех диаграмм на изоэнергетических
уровнях, тип которых устойчив относительно малых изменений
физических параметров и постоянной интеграла энергии.

\end{abstract}

\Section{Предварительные сведения} Задача о движении волчка
Ковалевской в двойном силовом поле описывается системой уравнений
\begin{equation}
\begin{array}{c}
2\omega _1^ {\boldsymbol \cdot}   = \omega _2 \omega _3  + \beta _3
,\qquad 2\omega _2^ {\boldsymbol \cdot}   =  - \omega _1 \omega _3 -
\alpha _3 ,\qquad
\omega _3^ {\boldsymbol \cdot}   = \alpha _2  - \beta _1 , \\[1.5mm]
\alpha _1^ {\boldsymbol \cdot}   = \alpha _2 \omega _3  - \alpha_3
\omega_2,\qquad \beta _1^ {\boldsymbol \cdot}   = \beta _2 \omega _3
-
\beta_3 \omega_2  \qquad (123). \\
\end{array}
\end{equation}
Здесь и в дальнейшем используются соглашения и обозначения работы
[1]. Символ (123) означает, что невыписанные уравнения второй группы
получаются из приведенных циклической перестановкой индексов. Как
показано в [1], без ограничения общности можно полагать
характеристические векторы силовых полей
$\boldsymbol{\alpha},\boldsymbol{\beta}$ взаимно ортогональными и
ограничить систему (1) на многообразие $P^6$, заданное в
пространстве ${\bf{R}}^9 ({\boldsymbol{\omega }},{\boldsymbol{\alpha
}},{\boldsymbol{\beta }})$ геометрическими интегралами
\begin{equation}
\alpha _1^2  + \alpha _2^2  + \alpha _3^2  = a^2 ,\quad \beta _1^2 +
\beta _2^2  + \beta _3^2  = b^2 ,\quad \alpha _1 \beta _1  + \alpha
_2 \beta _2  + \alpha _3 \beta _3  = 0.
\end{equation}
Полагаем для определенности $a \geqslant b \geqslant 0$. Если $a
> b > 0$, то система (1), (2) гамильтонова с тремя степенями свободы,
не имеет явных групп симметрий и не сводится к семейству
гамильтоновых систем с двумя степенями свободы. Полную
интегрируемость по Лиувиллю обеспечивают интегралы в инволюции [2,
3]
\begin{equation}
\begin{array}{l}
H = \omega _1^2  + \omega _2^2  + {1 \over 2}\omega _3^2  - (\alpha
_1  + \beta _2 ), \\[2mm]
K = (\omega _1^2  - \omega _2^2  + \alpha
_1 - \beta _2 )^2  +
(2\omega _1 \omega _2  + \alpha _2  + \beta _1 )^2 , \\[2mm]
G = (\alpha _1 \omega _1  + \alpha _2 \omega _2  + {1 \over 2}\alpha
_3 \omega _3 )^2  + (\beta _1 \omega _1  + \beta _2 \omega _2  + {1
\over 2}\beta _3 \omega _3 )^2  +  \\
\qquad {} + \omega _3 (\gamma _1 \omega _1  + \gamma _2 \omega _2 +
{1 \over 2}\gamma _3 \omega _3 ) - \alpha _1 b^2  - \beta _2 a^2
\end{array}
\end{equation}
($\gamma_i$ -- компоненты в подвижных осях вектора
${\boldsymbol{\alpha }} \times {\boldsymbol{\beta }}$).

В работах [1, 4] найдено множество критических точек $c(J)$
интегрального отображения $ J=H \times K \times G:P^6 \to
{\bf{R}}^3$ и указаны уравнения поверхностей в пространстве ${\bf
R}^3(h,k,g)$, несущих в себе бифуркационную диаграмму $\Sigma (J)$
отображения $J$. Оказалось, что $c(J)$ представимо в виде
объединения трех инвариантных подмножеств $\mathfrak{M},
\mathfrak{N}, \mathfrak{O}$, каждое из которых является почти всюду
гладким четырехмерным многообразием и в гладкой части задается в
$P^6 $ системой двух инвариантных соотношений. Первое множество
$\mathfrak{M}$ было ранее указано в [2] как многообразие частной
интегрируемости системы (1). Оно совпадает с нулевым уровнем
интеграла~$K$. Следовательно, этот класс движений обобщает 1-й класс
Аппельрота~[5] движений волчка Ковалевской в поле силы тяжести~[6] .
Фазовая топология этого случая (без учета интегрируемости системы в
целом) изучена Д.Б.~Зотьевым~[7]. Множество $\mathfrak{N}$ найдено в
[8] как обобщение семейства особо замечательных движений 2-го и 3-го
классов Аппельрота. Движения на $\mathfrak{N}$ полностью исследованы
в [9, 10]. Свойства системы, индуцированной на множестве
$\mathfrak{O}$ (обобщение 4-го класса Аппельрота), изучались в
работе~[11]. Относительно бифуркационной диаграммы в работах~[1, 4]
показано, что

1) $J(\mathfrak{M})\subset \Sigma_1$, где $\Sigma_1 = \{k=0\}$;

2) $J(\mathfrak{N})\subset \Sigma_2$, где $\Sigma_2 =
\{(2g-p^2h)^2-r^4k=0\}$;

3) $J(\mathfrak{O})\subset (\Sigma_3 \cup \Sigma_4)$, где $\Sigma_3$
-- пара прямых
\begin{equation}
g = \pm abh,\qquad k = (a \mp b)^2,
\end{equation}
а $\Sigma_4$ -- поверхность кратных корней многочлена ${\psi (s) =
s^2 (s - h)^2  + (p^2  - k)s^2  - 2gs + \gamma^2}$, который при
исчезновении второго силового поля обращается в резольвенту Эйлера
второго многочлена Ковалевской. Здесь введены положительные
параметры $p,\, r,\, \gamma$: \; $p^2 = a^2+b^2,\, r^2=a^2-b^2,\,
\gamma=ab$. Уравнение дискриминантной поверхности $\Sigma_4$ в форме
${D(h,k,g)=0}$ неразрешимо явно относительно какой-либо из
переменных $h,k,g$. В дальнейшем удобно считать $h$ и $s$
независимыми параметрами на $\Sigma_4$. Тогда система $\psi(s)
=0,\,\psi'(s)=0$ дает зависимости $g,k$ от $s,h$ на $\Sigma_4$:
\begin{equation}
g = s^2(h-s)+\frac {\gamma^2}{s},\qquad k=3s^2-4h s+h^2+p^2-\frac
{\gamma^2}{s^2}.
\end{equation}

В работе [4] сформулирована задача полного описания бифуркационной
диаграммы $\Sigma=\Sigma (J)$ путем нахождения условий на постоянные
первых интегралов, определяющих $\Sigma$ в объединении $\widetilde
\Sigma = \Sigma_1 \cup \Sigma_2 \cup \Sigma_3 \cup \Sigma_4$. Точки
множества $\Sigma \cap \Sigma_i$ будем для краткости называть
допустимыми точками листа $\Sigma_i$. Предложено искать условия
допустимости на сечениях бифуркационной диаграммы плоскостями,
параллельными плоскости $Ogk$. Тем самым значение энергии $h$
выбирается в качестве параметра, что соответствует исследованию
бифуркационных диаграмм $\Sigma^h$ пары интегралов $G,K$ на
изоэнергетических уровнях $\{\zeta \in P^6: H(\zeta) = h\}$. В
работах [12, 4] намечен следующий подход к решению поставленной
задачи. Представим множество критических точек $c(J)$ как
объединение непересекающихся множеств $\mathfrak{C}^{(i)}=\{\zeta
\in c(J) : \mathop{\rm rank}\nolimits J(\zeta) = i\}$. Тогда
$\Sigma$ можно рассматривать как двумерный клеточный комплекс;
объединение $\mathfrak{S}^{(i)}$ клеток размерности $i\,(0 \leqslant
i\leqslant 2)$ совпадает с $J(\mathfrak{C}^{(i)})$. Очевидно, что
$\partial \mathfrak{S}^{(2)} \subset \mathfrak{S}^{(0)} \cup
\mathfrak{S}^{(1)}$, $\partial \mathfrak{S}^{(1)} \subset
\mathfrak{S}^{(0)}$, т.е. фактические граничные точки допустимых
множеств на листах $\Sigma_i$ (точки $\Sigma_i$, через которые
бифуркационная диаграмма не может быть гладко продолжена в
рассматриваемом листе или на другой лист, касающийся его в данной
точке) содержатся в образе множества $\{\zeta \in c(J) : \mathop{\rm
rank}\nolimits J(\zeta) < 2\}$. Поэтому для того чтобы получить
граничные условия для $\Sigma \subset \widetilde \Sigma$, необходимо
указать все траектории в фазовом пространстве задачи, на которых
ранг интегрального отображения падает более, чем на единицу, т.е.
$\mathop{\rm rank}\nolimits J = 0$ или $\mathop{\rm rank}\nolimits J
= 1$. Ниже найдены все такие движения и, путем рассмотрения их
образов на листах поверхности $\widetilde \Sigma$, получены явные
неравенства, определяющие бифуркационную диаграмму. Построены все
сечения $\Sigma^h$ в зависимости от постоянной интеграла энергии и
физических параметров задачи.

\Section {Случаи сильного падения ранга интегрального отображения}
Отметим некоторые факты, имеющие место для произвольного
осесимметричного тела, вращающегося в двух независимых постоянных
полях с центрами приложения в экваториальной плоскости.

\begin{proposition}[1][13] При условии $a\ne b \ne 0$
уравнения движения тела имеют на $P^6$ ровно четыре положения
равновесия $c_j\; (j = 0,1,2,3)$:
$$
\begin{array}{lllll}
c_0: & {\boldsymbol \omega} = 0, & {\boldsymbol \alpha} = {\bf e}_1
, & {\boldsymbol \beta} = {\bf e}_2 ; & H(c_0 ) =  - a - b;\\
c_1: & {\boldsymbol \omega} = 0, & {\boldsymbol \alpha} = {\bf e}_1, & {\boldsymbol \beta} =  - {\bf e}_2 ; & H(c_1 ) =  - a + b;\\
c_2: & {\boldsymbol \omega} = 0, & {\boldsymbol \alpha} =  - {\bf e}_1, & {\boldsymbol \beta} = {\bf e}_2 ; & H(c_2 ) = a - b;\\
c_3: & {\boldsymbol \omega} = 0, & {\boldsymbol \alpha} =  - {\bf
e}_1, & {\boldsymbol \beta } =  - {\bf e}_2 ; & H(c_3 ) = a + b.
\end{array}
$$
При этом индекс Морса энергии $H$ в точке $c_j$ равен $j$. В
частности, $c_0$ -- точка глобального минимума $H$.
\end{proposition}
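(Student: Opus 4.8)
The plan is to identify the rank-zero critical points of $J=H\times K\times G$ with the equilibria of system (1), and then to read off the Morse index of $H$ from the Hessian of its restriction to $P^6$. A point $\zeta$ has $\mathop{\rm rank}\nolimits J(\zeta)=0$ precisely when $dH=dK=dG=0$ at $\zeta$; in particular $dH_\zeta=0$. Since $P^6$ carries the symplectic structure making $H,K,G$ a commuting integrable family and the right-hand side of (1) the Hamiltonian field $X_H$ (here $a>b>0$), the condition $dH_\zeta=0$ is equivalent to the vanishing of that right-hand side. Hence the rank-zero set of $J$ is contained in the set of equilibria, which I first determine, afterwards checking that $dK$ and $dG$ vanish there as well.

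First I set the right-hand sides of (1) to zero. The two vector equations read $\boldsymbol\alpha\times\boldsymbol\omega=0$ and $\boldsymbol\beta\times\boldsymbol\omega=0$; since $|\boldsymbol\alpha|=a$ and $|\boldsymbol\beta|=b$ are positive and $\boldsymbol\alpha\perp\boldsymbol\beta$ by (2), the vectors $\boldsymbol\alpha,\boldsymbol\beta$ cannot both be collinear with one and the same nonzero $\boldsymbol\omega$, which forces $\boldsymbol\omega=0$. The three scalar $\omega$-equations then reduce to $\beta_3=0$, $\alpha_3=0$, $\alpha_2=\beta_1$, so both vectors lie in the ${\bf e}_1{\bf e}_2$-plane. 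Writing $\boldsymbol\alpha=a(\cos\theta,\sin\theta,0)$ and the orthogonal $\boldsymbol\beta=\pm b(-\sin\theta,\cos\theta,0)$, the condition $\alpha_2=\beta_1$ becomes $(a\pm b)\sin\theta=0$; as $a\ne b$ and $b\ne0$ this forces $\sin\theta=0$, and the four sign choices give exactly ${\boldsymbol\omega}=0$, ${\boldsymbol\alpha}=\pm a{\bf e}_1$, ${\boldsymbol\beta}=\pm b{\bf e}_2$. Substituting into $H=\omega_1^2+\omega_2^2+\tfrac12\omega_3^2-(\alpha_1+\beta_2)$ reproduces the four values $-a-b$, $-a+b$, $a-b$, $a+b$, and labelling accordingly gives the points $c_0,\dots,c_3$ of the statement.

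Next I confirm that each $c_j$ is genuinely a rank-zero point of $J$, i.e. that $dK=dG=0$ there. The cleanest argument uses that the $c_j$ are the only equilibria, hence isolated, together with $\{H,K\}=\{H,G\}=0$: the flows of $X_K$ and $X_G$ commute with that of $X_H$, so they permute the equilibria of $X_H$ and must fix the isolated point $c_j$; therefore $X_K(c_j)=X_G(c_j)=0$, and by the symplectic structure $dK_{c_j}=dG_{c_j}=0$. As a concrete check, at $c_0$ one computes $dK=2(a-b)(d\alpha_1-d\beta_2)$, and both $d\alpha_1$ and $d\beta_2$ vanish on $T_{c_0}P^6$ because $\boldsymbol\alpha$ and $\boldsymbol\beta$ point along the gradients of the length constraints in (2). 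Thus the rank-zero set of $J$ is exactly $\{c_0,c_1,c_2,c_3\}$.

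Finally I compute the index of $H$ at each $c_j$. Because $H$ has no cross terms between $\boldsymbol\omega$ and $(\boldsymbol\alpha,\boldsymbol\beta)$ and the constraints (2) do not involve $\boldsymbol\omega$, the Hessian on $T_{c_j}P^6$ splits: the $\boldsymbol\omega$-block $\delta\omega_1^2+\delta\omega_2^2+\tfrac12\delta\omega_3^2$ contributes three positive directions at every $c_j$, so the index comes entirely from the three-dimensional $(\boldsymbol\alpha,\boldsymbol\beta)$-block arising from $-\alpha_1-\beta_2$. Using the local graphs $\alpha_1=\pm\sqrt{a^2-\alpha_2^2-\alpha_3^2}$ and $\beta_2=\pm\sqrt{b^2-\beta_1^2-\beta_3^2}$, the second differential of $-\alpha_1$ equals $\pm\tfrac1a(d\alpha_2^2+d\alpha_3^2)$ according as $\alpha_1=\pm a$, and similarly that of $-\beta_2$ equals $\pm\tfrac1b(d\beta_1^2+d\beta_3^2)$ according as $\beta_2=\pm b$. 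At $c_0$ this block is positive definite and at $c_3$ negative definite, giving indices $0$ and $3$ at once. The delicate cases are $c_1$ and $c_2$, where the block has mixed signature and one must impose the linearized orthogonality relation $\boldsymbol\alpha\cdot\delta\boldsymbol\beta+\delta\boldsymbol\alpha\cdot\boldsymbol\beta=0$; this relation couples $\delta\alpha_2$ with $\delta\beta_1$, and eliminating one of them leaves a coefficient proportional to $\pm(a-b)$. This is the main obstacle, and it is exactly where the hypothesis $a\ne b$ is indispensable: the sign of $a-b$ resolves the ambiguity, yielding index $1$ at $c_1$ and index $2$ at $c_2$. Altogether $\mathop{\rm ind}\nolimits H(c_j)=j$, so in particular $c_0$ is the minimum point of $H$.
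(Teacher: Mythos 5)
The paper offers no proof to compare against: Proposition~1 is imported from reference [13] (Kharlamov--Zotev) and merely quoted here, so your argument has to be judged on its own terms, and on those terms it is correct and complete. The two substantive steps are both handled properly. In the equilibrium analysis, orthogonality of the nonzero vectors $\boldsymbol\alpha,\boldsymbol\beta$ forces $\boldsymbol\omega=0$, and the residual conditions $\alpha_3=\beta_3=0$, $\alpha_2=\beta_1$ reduce to $(a\pm b)\sin\theta=0$; note that the hypothesis $a\ne b$ is already indispensable at this stage, not only in the index count, since for $a=b$ one sign choice yields a whole circle of equilibria instead of isolated points. In the Morse-index computation you correctly remember that the local graphs $\alpha_1(\alpha_2,\alpha_3)$ and $\beta_2(\beta_1,\beta_3)$ do not furnish independent coordinates on $P^6$ and restrict the resulting quadratic form to the hyperplane $\boldsymbol\alpha\cdot\delta\boldsymbol\beta+\boldsymbol\beta\cdot\delta\boldsymbol\alpha=0$; the surviving coefficient proportional to $a-b$ at $\delta\alpha_2^2$ then gives index $1$ at $c_1$ and $2$ at $c_2$ under the paper's standing convention $a>b$ (for $a<b$ the indices at $c_1$ and $c_2$ would swap, together with the ordering of the corresponding values of $H$, so the labelling by index would still hold after renaming). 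The auxiliary paragraph establishing $dK=dG=0$ at the $c_j$ by letting the commuting flows of $X_K,X_G$ act on the finite invariant set of equilibria is sound, but it is not needed for the proposition as stated; it really belongs to Theorem~1, where the paper identifies $\mathfrak{C}^{(0)}$ with exactly these four points.
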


Отсюда следует также, что все изоэнергетические многообразия связны.
Это свойство оказывается весьма полезным при глобальном исследовании
областей существования произвольных (не обязательно критических
движений), так как тогда любая непрерывная функция при фиксированной
энергии принимает все значения от наименьшего до наибольшего, а эти
последние фактически устанавливаются ниже.

\begin{proposition}[2][1] Уравнения движения тела имеют на $P^6$ следующие
семейства решений маятникового типа:
\begin{gather}
{\boldsymbol \alpha } \equiv  \pm a{\bf e}_1, \; {\boldsymbol \beta
} = b({\bf e}_2 \cos \theta - {\bf e}_3 \sin \theta ), \;
{\boldsymbol \omega } = \theta ^ {\boldsymbol \cdot} {\bf e}_1 , \;
2\theta ^{ {\boldsymbol \cdot}  {\boldsymbol \cdot} } =  - b\sin
\theta; \\[1.5mm]
{\boldsymbol \alpha } = a({\bf e}_1 \cos \theta + {\bf e}_3 \sin
\theta ), \; {\boldsymbol \beta } \equiv  \pm b{\bf e}_2 , \;
{\boldsymbol \omega } = \theta ^ {\boldsymbol \cdot}  {\bf e}_2 , \;
2\theta ^{ {\boldsymbol \cdot}  {\boldsymbol \cdot} }  =  - a\sin
\theta; \\[1.5mm]
{\boldsymbol{\alpha }} = a({\bf{e}}_1 \cos \theta  - {\bf{e}}_2 \sin
\theta ),\;{\boldsymbol{\beta }} =  \pm b({\bf{e}}_1 \sin \theta  +
{\bf{e}}_2 \cos \theta ), \; {\boldsymbol{\omega }} = \theta ^
{\boldsymbol \cdot}  {\bf{e}}_3 ,\; \theta ^{ {\boldsymbol \cdot}
{\boldsymbol \cdot} }  =  - (a \pm b)\sin \theta.
\end{gather}
\end{proposition}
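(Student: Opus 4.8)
The statement asserts two things about each of the three families (6)--(8): that it is an exact solution of the reduced system (1) subject to the geometric constraints (2), and that it is a \emph{critical} motion, i.e. its trajectory lies in $c(J)$. The plan is to settle the first point by direct substitution and the second by a short computation with the differentials of $H$ and $K$ on the tangent space $T_\zeta P^6$; as it turns out, the component $G$ will never be needed.

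For the first point I would insert each ansatz componentwise into the nine scalar equations encoded by (1) (using the cyclic convention $(123)$). For family (6), with $\boldsymbol\alpha=(\pm a,0,0)$, $\boldsymbol\beta=(0,b\cos\theta,-b\sin\theta)$ and $\boldsymbol\omega=(\dot\theta,0,0)$, one checks at once that the three relations (2) hold identically, that the equations for $\dot{\boldsymbol\alpha}$, for $\dot\omega_2,\dot\omega_3$ and for $\dot\beta_1$ collapse to $0=0$, that the equations for $\dot\beta_2,\dot\beta_3$ merely reproduce the parametrization, and that the single surviving equation $2\dot\omega_1=\omega_2\omega_3+\beta_3$ becomes exactly $2\ddot\theta=-b\sin\theta$. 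Family (7) is the image of (6) under the cyclic relabelling of $(123)$ exchanging the roles of $\boldsymbol\alpha$ and $\boldsymbol\beta$, so only (6) and (8) require independent checking. This step is routine.

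The substantive point is criticality. Write $K=K_1^2+K_2^2$ with $K_1=\omega_1^2-\omega_2^2+\alpha_1-\beta_2$ and $K_2=2\omega_1\omega_2+\alpha_2+\beta_1$, so that $dK=2K_1\,dK_1+2K_2\,dK_2$. Along (6) one has $K_2\equiv0$ and $K_1=\dot\theta^2\pm a-b\cos\theta=H\pm2a$; moreover the constraint differential $\boldsymbol\alpha\cdot d\boldsymbol\alpha=0$ forces $d\alpha_1=0$ on $T_\zeta P^6$, after which a one-line comparison gives $dK_1=dH$ there. Hence on $T_\zeta P^6$ one obtains the pointwise identity $dK=2(H\pm2a)\,dH$, so $dK$ is proportional to $dH$, the covectors $dH,dK,dG$ span at most two dimensions, and $\mathop{\rm rank}\nolimits dJ\le2<3$. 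For family (8) the same decomposition yields $K_1=(a\mp b)\cos\theta$ and $K_2=-(a\mp b)\sin\theta$; substituting $dK_1=\delta\alpha_1-\delta\beta_2$, $dK_2=\delta\alpha_2+\delta\beta_1$ and then imposing $\boldsymbol\alpha\cdot d\boldsymbol\alpha=0$ and $\boldsymbol\beta\cdot d\boldsymbol\beta=0$ makes both resulting brackets vanish, so $dK=0$ on $T_\zeta P^6$ and again $\mathop{\rm rank}\nolimits dJ\le2$. As a consistency check, $K\equiv(a\mp b)^2$ along (8) matches the value $k=(a\mp b)^2$ defining $\Sigma_3$ in (4).

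I expect the only delicate part to be the bookkeeping of the constraint relations when projecting $dK$ onto $T_\zeta P^6$: the proportionality $dK\parallel dH$ holds \emph{only} after the differentials of (2) are imposed, since in the ambient $\mathbf{R}^9$ the covectors $dK_1$ and $dH$ are genuinely independent (their coefficients of $d\alpha_1$ differ). One must therefore work consistently in the tangent space of $P^6$ and treat the two sign choices in each family uniformly. Once this is handled, criticality follows without ever computing $dG$, and the reduction of each family to its pendulum equation completes the proof.
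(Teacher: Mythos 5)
Your verification is correct, but be aware that the paper contains no proof of Proposition~2 at all: the second bracketed argument ``[1]'' is a citation, and the statement is imported from an earlier publication. The nearest in-paper material is Theorems~2 and~3, and there the logic runs in the opposite direction to yours: the author starts from the criticality conditions ($dH=\lambda\,dK$ with $\lambda\ne0$, comparison of the $\partial/\partial\omega_3$ components forcing $\omega_3\equiv0$; respectively $K\ne0$, $K_{\omega_1}=K_{\omega_2}=0$ forcing $\omega_1=\omega_2\equiv0$) and integrates system (1) under those restrictions to \emph{derive} (6)--(8), which yields the stronger classification that these are the only such motions. You go the other way --- substitute the ansatz and check $\mathop{\rm rank}\nolimits dJ\le2$ --- which is the cheaper and entirely adequate route for the mere containment asserted here; your key computations (that $K_2\equiv0$ and $dK=2K_1\,dK_1$ along (6), that the constraint differential kills $d\alpha_1$ so that $dK_1=dH$ on $T_\zeta P^6$, and that on (8) both $dK_1$ and $dK_2$ vanish after imposing ${\boldsymbol\alpha}\cdot d{\boldsymbol\alpha}=0$, ${\boldsymbol\beta}\cdot d{\boldsymbol\beta}=0$) all check out. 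Two small repairs: the map carrying (6) to (7) is the transposition $1\leftrightarrow2$ combined with ${\boldsymbol\alpha}\leftrightarrow{\boldsymbol\beta}$, $a\leftrightarrow b$ and a sign adjustment on the third components, not a cyclic relabelling of $(123)$, and since you have not verified that this symmetry preserves (1) and the integrals (3), it is safer simply to substitute (7) directly --- it costs the same three lines as (6). You should also state explicitly that membership in $c(J)$ means $\mathop{\rm rank}\nolimits dJ<3$, consistent with the strata $\mathfrak{C}^{(i)}$, $0\le i\le2$, used later in the paper.
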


Вернемся к обобщенному волчку Ковалевской. Напомним, что
$\mathfrak{M}=\{K=0\} \subset P^6$ и $J(\mathfrak{M})= \Sigma_1 \cap
\Sigma$. Известно~[7], что
\begin{equation}
(p^2H-2G) |_\mathfrak{M} \equiv \frac{1}{2}F^2,
\end{equation}
где $F:\mathfrak{M} \to {\bf R}$ -- частный интеграл
О.И.~Богоявленского. Константу этого интеграла обозначим через $f$.
Обозначим $H^{(1)}=H |_\mathfrak{M} $ и рассмотрим интегральное
отображение $J^{(1)}=H^{(1)}\times F:\,\mathfrak{M} \to {\bf R}^2$.

\begin{proposition}[3][7] Бифуркационная диаграмма отображения
$J^{(1)}$ есть множество решений уравнения
\begin{equation}
[27 f^4
-9(h^2-2p^2)(2hf^2+r^4)+2(h^2-2p^2)^3]^2-4[(h^2-2p^2)^2-3(2hf^2+r^4)]^3
= 0
\end{equation}
в области $\{h \geqslant -2b\}$. Множество
допустимых значений $J^{(1)}$ определяется неравенствами $ |f|
\leqslant f_0(h)$, $h \geqslant -2b$, где через $f_0(h)$ обозначен
наибольший положительный корень уравнения $(10)$ относительно $f$.
\end{proposition}

Вообще говоря, из зависимости функций $H^{(1)},F$ на подмногообразии
$\mathfrak{M} \subset P^6$, даже при том, что на $\mathfrak{M}$
всюду $dK \equiv 0$, не следует, что в этих же точках $\mathop{\rm
rank}\nolimits J = 1$. Однако ниже будет показано, что это
действительно так. В следующих утверждениях содержится полное
описание множества $\mathfrak{C}^{(0)} \cup \mathfrak{C}^{(1)}$ и
его образа в пространстве констант первых интегралов.

\begin{theorem}[1]
Ранг отображения $J$ равен нулю в точности в положениях равновесия
тела. Все четыре положения равновесия принадлежат $\mathfrak{N} \cap
\mathfrak{O}$. В частности, их $J$-образы принадлежат одновременно
$\Sigma_2,\Sigma_3$ и $\Sigma_4$. На плоскости $(s,h)$ им
соответствуют точки
\begin{equation}
\begin{array}{ll}
Q_{01} = (-a,-(a+b)), & Q_{02} = (-b,-(a+b));\\
Q_{11} = (-a,-(a-b)), & Q_{12} = (b,-(a-b));\\
Q_{21} = (a,a-b), & Q_{22} = (-b,a-b);\\
Q_{31} = (a,a+b), & Q_{32} = (b,a+b).
\end{array}
\end{equation}
\end{theorem}
\begin{proof}
Если ранг отображения $J$ равен нулю, то в такой точке обязательно
$dH=0$. Это имеет место лишь в положениях равновесия соответствующей
гамильтоновой системы, т.e. в неподвижных точках системы (1), (2).
Они описаны в предложении~1. Непосредственно проверяется, что в этих
точках $dG=0, dK=0$. Поэтому $\mathfrak{C}^{0}=\{c_0,c_1,c_2,c_3\}$.
Обозначим $P_j=J(c_j)$. Координаты $P_j$ вычисляются из (3)
$$
\begin{array}{lllll}
P_0: & h=-(a+b), & g=-ab(a+b), & k=(a-b)^2;\\
P_1: & h=-(a-b), & g=ab(a-b), & k=(a+b)^2;\\
P_2: & h=a-b, & g=-ab(a-b), & k=(a+b)^2;\\
P_3: & h=a+b, & g=ab(a+b), & k=(a-b)^2
\end{array}
$$
и, очевидно, удовлетворяют одному из соотношений (4) и уравнению
поверхности $\Sigma_2$. Подставим эти значения в (5) с тем, чтобы
проверить существование решения относительно $s$. Получим выражения
(11). Поэтому $P_j \in \Sigma_4$. Здесь следует отметить, что
отображение (5) плоскости $(s,h)$ в $\Sigma_4$ не взаимнооднозначно.
\end{proof}

\begin{theorem}[2]
Траектории, удовлетворяющие условиям $\mathop{\rm rank}\nolimits J
=1$ и $dK \ne 0$, исчерпываются движениями $(6)$, $(7)$.
Соответствующие значения первых интегралов заполняют кривые
$$
\begin{array}{llll}
\lambda_1: & g = a^2 h+a r^2 , & k=(h+2a)^2 & h \geqslant -(a+b);
\\
\lambda_2: & g = a^2 h-a r^2, & k=(h-2a)^2,& h \geqslant a-b;\\
\lambda_3: & g = b^2 h-b r^2, & k=(h+2b)^2,& h \geqslant -(a+b);
\\
\lambda_4: & g = b^2 h+b r^2, & k=(h-2b)^2,& h \geqslant -(a-b),
\end{array}
$$
принадлежащие $\Sigma_2 \cap \Sigma_4$. На плоскости $(s,h)$ эти
кривые изображаются лучами
$$
\begin{array}{llllll}
\lambda_1: & s=-a, & h \geqslant -(a+b); & \quad \lambda_2: & s=a, & h \geqslant a-b;\\
\lambda_3: & s=-b, & h \geqslant -(a+b); & \quad \lambda_4: & s=b, &
h \geqslant -(a-b).
\end{array}
$$
\end{theorem}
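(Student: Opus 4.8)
\section*{Proof proposal}

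The plan is to split the statement into four tasks and dispatch them in order: (i) confirm that along families $(6)$ and $(7)$ one has $\mathop{\rm rank}\nolimits J=1$ and $dK\neq0$; (ii) compute the $J$-images of these families explicitly and recover the arcs $\lambda_1,\dots,\lambda_4$ together with their $h$-ranges; (iii) locate these arcs inside $\Sigma_2\cap\Sigma_4$ and read off the constant value of $s$; and (iv) prove that $(6),(7)$ exhaust the rank-$1$ stratum with $dK\neq0$. Since Proposition~2 already furnishes $(6),(7)$ as critical motions and Theorem~1 pins down the only rank-$0$ points, tasks (i)--(iii) reduce to substitution, while (iv) is the genuine completeness argument.

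I would begin with the image, which is the concrete core. Take family $(6)$ with $\boldsymbol\alpha\equiv a\mathbf e_1$; inserting $\omega_1=\dot\theta$, $\omega_2=\omega_3=0$, $\alpha_1=a$, $\beta_2=b\cos\theta$, $\beta_3=-b\sin\theta$ into $(3)$ collapses the integrals to
\[ H=\dot\theta^2-a-b\cos\theta,\qquad K=(\dot\theta^2+a-b\cos\theta)^2,\qquad G=a^2(\dot\theta^2-b\cos\theta)-ab^2. \]
The decisive tool is the pendulum energy integral: multiplying $2\ddot\theta=-b\sin\theta$ by $\dot\theta$ gives $\dot\theta^2-b\cos\theta=C$ along each trajectory. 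Eliminating $\dot\theta^2-b\cos\theta$ in favour of $h$ (here $C=h+a$) turns the three formulas into $k=(h+2a)^2$, $g=a^2h+ar^2$, which is exactly $\lambda_1$; the opposite sign of $\boldsymbol\alpha$ yields $\lambda_2$, and family $(7)$, with integral $\dot\theta^2-a\cos\theta=C'$, yields $\lambda_3,\lambda_4$. The $h$-bounds are then forced by $\dot\theta^2\geq0$: since $\dot\theta^2=b\cos\theta+C\geq0$ requires $C\geq-b$, for $\lambda_1$ this reads $h\geq-(a+b)$ (equality at the equilibrium $c_0$), and the remaining bounds $h\geq a-b$, $h\geq-(a+b)$, $h\geq-(a-b)$ follow identically.

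Next I would identify the arcs in $\Sigma_2\cap\Sigma_4$. Membership in $\Sigma_2$ is immediate: for $\lambda_1$ one finds $2g-p^2h=r^2(h+2a)$, so $(2g-p^2h)^2=r^4k$. For $\Sigma_4$ I would feed a \emph{constant} $s$ into the parametrization $(5)$; with $s=-a$ and $\gamma^2=a^2b^2$ the formulas $(5)$ reproduce $g=a^2h+ar^2$ and $k=(h+2a)^2$ verbatim, so $\lambda_1\subset\Sigma_4$ at $s=-a$, and likewise $\lambda_2,\lambda_3,\lambda_4$ sit at $s=a,-b,b$; with the $h$-ranges this gives the four rays in the $(s,h)$-plane. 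For the reverse inclusion $\Sigma_2\cap\Sigma_4\subset\bigcup\lambda_i$ I would substitute all of $(5)$ into $(2g-p^2h)^2-r^4k=0$, clear the factor $s^2$, and note that $(s^2-a^2)(s^2-b^2)$ divides the resulting polynomial in $s$ (because $s=\pm a,\pm b$ are roots for every $h$); one then checks that the complementary quartic factor contributes no points compatible with the reality range.

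The completeness half of the first assertion is where I expect the real work. Along $(6),(7)$ one has $\mathop{\rm rank}\nolimits J\leq1$ by Proposition~2 and $=1$ off the four points of Proposition~1, so only $dK\neq0$ is new; writing $K=u^2+v^2$ with $u=\omega_1^2-\omega_2^2+\alpha_1-\beta_2$, the ambient $\nabla K=2u\,\nabla u$ carries an $\omega$-component $4u\dot\theta$ that the $\omega$-free differentials of the constraints $(2)$ cannot absorb once $\dot\theta\neq0$, giving $dK|_{P^6}\neq0$. For the converse I would invoke the global description $c(J)=\mathfrak M\cup\mathfrak N\cup\mathfrak O$ of $[1,4]$: on $\mathfrak M$ one has $dK\equiv0$, and the crucial lemma is the analogous vanishing on $\mathfrak O$ --- along family $(8)$ the same $\nabla K$ turns out to be a combination $\lambda\,\nabla|\boldsymbol\alpha|^2+\mu\,\nabla|\boldsymbol\beta|^2$ of constraint gradients, so $dK|_{P^6}\equiv0$ there too, leaving $\mathfrak N=(6)\cup(7)$ as the only surviving rank-$1$ stratum with $dK\neq0$. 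Establishing this $dK|_{P^6}=0$ identity on all of $\mathfrak O$ (not just the visible family $(8)$) and ruling out stray rank-$1$, $dK\neq0$ points on the $\Sigma_4$-sheet of $J(\mathfrak O)$ is, I expect, the main obstacle; everything else is the bookkeeping above.
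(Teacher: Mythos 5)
Your computational half is sound: the evaluation of $H,K,G$ along $(6)$, the use of the pendulum energy integral $\dot\theta^2-b\cos\theta=\mathrm{const}$ to obtain $k=(h+2a)^2$, $g=a^2h+ar^2$, and the check that the constant values $s=\mp a,\mp b$ substituted into $(5)$ reproduce the $\lambda_i$ are exactly what the paper means when it says the values of $g,k,s$ and the bounds on $h$ are computed directly from $(3)$ and $(5)$. The genuine gap is in your completeness step (iv). The lemma you lean on --- that $dK|_{P^6}\equiv 0$ on all of $\mathfrak O$ --- is false: by Theorem~7 the image $J(\mathfrak O)$ covers a two-dimensional region of $\Sigma_4$, on which $k=3s^2-4hs+h^2+p^2-\gamma^2/s^2$ is not constant; hence $K$ is a nonconstant function on $\mathfrak O$ and its differential cannot vanish identically there. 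The concluding identification of the surviving stratum with $(6)\cup(7)$ is also dimensionally impossible: $\mathfrak N$ is a four-dimensional invariant submanifold whose image fills a two-dimensional piece of $\Sigma_2$, while the pendulum families $(6),(7)$ form only a two-dimensional set, so even after restricting to $\mathfrak N$ you would still have to isolate its rank-one locus, which is the whole problem over again.

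The paper's argument avoids the global decomposition entirely and is the route you should take. Since $\mathop{\rm rank}\nolimits J=1$, the point is not one of the equilibria $c_j$ (those have rank $0$ by Theorem~1), so $dH\ne 0$; combined with $dK\ne 0$ and the rank condition this yields the multiplier relation $dH=\lambda\, dK$ with $\lambda\ne 0$. Because $K$ does not contain $\omega_3$ while $\partial H/\partial\omega_3=\omega_3$, the $\omega_3$-component of this relation forces $\omega_3\equiv 0$ on the flow-invariant set in question; differentiating $\omega_3=0$ along system $(1)$ gives $\alpha_2=\beta_1$, and further differentiation in $t$ pins the trajectory down to one of the families $(6),(7)$. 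This local Lagrange-multiplier computation is where the actual content of the theorem lies; once it is in place, your substitutions into $(3)$ and $(5)$ finish the proof.
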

\begin{proof} Ненулевой ранг $J$ предполагает, что $dH \ne 0$. Поскольку $dK \ne
0$, должно выполняться условие $dH = \lambda dK$ с некоторой
постоянной $\lambda \ne 0$. Компонента с ${\partial
/\partial\omega_3}$ дает $\omega_3 \equiv 0$. Из системы (1)
вытекает, что такое тождество по $t$ может иметь место только для
маятниковых движений (6), (7). Значения $g,k,s$, а также пределы
изменения $h$ вычисляются непосредственно из выражений (3) и
уравнений (5).
\end{proof}

\begin{theorem}[3]
Траектории, удовлетворяющие условиям $\mathop{\rm rank}\nolimits J
=1, dK =0$ и $K \ne 0$, исчерпываются движениями $(8)$. Их $J$-образ
заполняет лучи
\begin{equation}
\begin{array}{lll}
g = abh,& k = (a - b)^2 ,& h \geqslant  - (a +
b),\\
g =  - abh,& k = (a + b)^2 ,& h \geqslant  - (a - b).
\end{array}
\end{equation}
Часть этого множества, лежащая в $\Sigma_4$, определяется следующими
отрезками кривых в плоскости $(s,h)$:
$$
\begin{array}{ll}
\mu_1: \displaystyle{h=s-\frac{ab}{s}}, \, s \in [-a,0); & \mu_2:
\displaystyle{h=s-\frac{ab}{s}}, \, s \in [b,+\infty);\\[2mm]
\mu_3:
\displaystyle{h=s+\frac{ab}{s}}, \, s \in [-a,-b]; & \mu_4:
\displaystyle{h=s+\frac{ab}{s}}, \, s \in (0,+\infty).
\end{array}
$$
\end{theorem}
\begin{proof}
В предположениях теоремы имеем $K \ne 0$, $K_{\omega_1}=
K_{\omega_2}=0$, откуда следует, что $\omega_1 = \omega_2 \equiv 0$.
Тогда в силу системы (1) приходим к траекториям (8) и выражениям
(4). Допустимые значения $h$ в (12) вычисляются подстановкой (8) в
выражение для $H$ из (3). Подстановка же значений $g,k$ из (4) в (5)
дает искомые зависимости $h$ от $s$, а допустимые значения $h$ в
(12) определяют и допустимые значения $s$.
\end{proof}

\begin{theorem}[4]
Траектории, удовлетворяющие условиям $\mathop{\rm rank}\nolimits J
=1$, $dK = 0$ и $K = 0$, исчерпываются критическими движениями
гамильтоновой системы с двумя степенями свободы, индуцированной на
многообразии $\mathfrak{M}$. Эти траектории также лежат в
~$\mathfrak{O}$ и порождают следующие отрезки кривых на плоскости
$(s,h)$:
\begin{equation}
\begin{array}{lll}
\delta_1: & h=2s+\phi(s), & s \in [-b,0);\\
\delta_2: & h=2s+\phi(s), & s \in (0,b];\\
\delta_3: & h=2s-\phi(s), & s \in [a,+\infty),
\end{array}
\end{equation}
где $
\phi(s)=\displaystyle{\sqrt{(s^2-a^2)(s^2-b^2)/{s^2}} \geqslant 0}$.
\end{theorem}
\begin{proof}
Достаточно предположить, что $K=0$. Тогда автоматически $dK=0$, $dH
\ne 0$. Следовательно $\mathop{\rm rank}\nolimits J \leqslant 2$, и
ранг равен единице в точности в точках линейной зависимости $dH,dG$,
принадлежащих $\mathfrak{M}=\{K=0\}$. Поскольку в таких точках
должны существовать (с точностью до линейного преобразования) две
равные нулю независимые нетривиальные линейные комбинации
$dH,dG,dK$, то бифуркационные поверхности $\Sigma_1, \Sigma_4$
должны в соответствующих точках пересекаться трансверсально. Полагая
в (5) $k=0$, получаем искомые зависимости $h$ от $s$. Области
изменения $s$ вычисляются с применением предложения~3.
\end{proof}

\Section{Описание допустимого множества} Рассмотрим поверхность
$\Sigma_1$. Несмотря на то что существование и расположение корней
уравнения (10) полностью исследовано аналитически в работе~[7],
получить явные выражения для наибольшего положительного корня
$f_0(h)$ оказывается невозможным. Заметим, что подстановки (13)
позволяют параметрически выразить корни (10) в виде
$(h_i(s),f_i(s))$. Отсюда, в частности, следует, что все критические
точки частного интегрального отображения предложения~3 одновременно
являются точками, в которых $\mathop{\rm rank}\nolimits J = 1$.
Корень $f_0(h)$ отвечает кривой $\delta_1$. На ней зависимость $h$
от $s$ монотонно возрастающая. Обозначим ее обращение через $s_1(h)$
($h \in [-2b,+\infty],\, s_1(h) \in [-b,0)$). Тогда из (9), (10)
найдем выражение
$$
g = g_1(h) = s^3+\frac{ab}{s}-s^2 \phi(s)|_{s=s_1(h)}, \quad h
\geqslant -2b.
$$
Из (9) имеем также $p^2h - 2g \geqslant 0$. Теперь из предложения~3
и теоремы~4 получаем следующий результат.
\begin{theorem}[5]
Множество $\Sigma_1 \cap \Sigma$ имеет вид $ k = 0,\, h \geqslant
-2b,\, g_1(h) \leqslant g \leqslant \frac{1}{2}p^2h$.
\end{theorem}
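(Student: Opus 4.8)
The plan is to compute the image $J(\mathfrak{M})$ outright, using the relation $J(\mathfrak{M}) = \Sigma_1 \cap \Sigma$ recalled before Proposition~3. Since $K \equiv 0$ on $\mathfrak{M}$, this image lies automatically in the plane $\{k = 0\}$, which is the first defining equation. By Proposition~3 the coordinate $h = H^{(1)}$ ranges over $[-2b,+\infty)$ and, for each such $h$, the value $f$ of $F$ fills the segment $|f| \leqslant f_0(h)$; this supplies the range $h \geqslant -2b$ of the statement.

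First I would turn identity~$(9)$ into an explicit formula for the third coordinate on $\mathfrak{M}$. From $(p^2 H - 2G)|_{\mathfrak{M}} = \tfrac{1}{2} F^2$ one gets $g = \tfrac{1}{2} p^2 h - \tfrac{1}{4} f^2$, which is strictly decreasing in $f^2$. At $f = 0$ this gives the maximal value $g = \tfrac{1}{2} p^2 h$, hence the upper bound $g \leqslant \tfrac{1}{2} p^2 h$ (equivalently $p^2 h - 2g \geqslant 0$). Fixing $h$ and letting $f$ run continuously over $[-f_0(h), f_0(h)]$, the quantity $g$ sweeps the whole interval from $\tfrac{1}{2} p^2 h - \tfrac{1}{4} f_0(h)^2$ up to $\tfrac{1}{2} p^2 h$, so the slice of the image over each $h$ is a full segment, not merely its two endpoints.

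It then remains to recognize the lower endpoint as $g_1(h)$. For this I would invoke the computation performed just before the theorem: the outer boundary $|f| = f_0(h)$ of the region of possible motions is the branch of $(10)$ carried by $\delta_1$ of Theorem~4 through the parametrization $(13)$, and substituting $(13)$ together with the monotone inverse $s_1(h)$ into $g = \tfrac{1}{2} p^2 h - \tfrac{1}{4} f^2$ reproduces $g_1(h) = (s^3 + ab/s - s^2 \phi(s))|_{s = s_1(h)}$. Assembling the three bounds $k = 0$, $h \geqslant -2b$ and $g_1(h) \leqslant g \leqslant \tfrac{1}{2} p^2 h$ then yields the asserted description of $\Sigma_1 \cap \Sigma$.

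The step I expect to be the real obstacle is this last identification. One must verify that it is the largest root $f_0(h)$ of the discriminant $(10)$ that bounds the admissible segment, and that the lower envelope is furnished by the branch $\delta_1$ rather than by another branch of the bifurcation diagram of Proposition~3. This rests on the monotonicity of $h = 2s + \phi(s)$ on $s \in [-b,0)$, which makes $s_1(h)$ a single-valued inverse on $[-2b,+\infty)$; granting that, the passage from $(9)$ and $(13)$ to the closed form of $g_1(h)$ is a routine algebraic reduction.
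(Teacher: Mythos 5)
Your proposal follows essentially the same route as the paper: the admissible region $|f|\leqslant f_0(h)$, $h\geqslant -2b$ of Proposition~3 is carried to the $(h,g)$-plane by identity~(9), giving $g=\tfrac{1}{2}p^2h-\tfrac{1}{4}f^2$ and hence the upper bound $p^2h-2g\geqslant 0$, while the lower boundary $g_1(h)$ is obtained by identifying the graph of $f_0(h)$ with the branch $\delta_1$ of Theorem~4 through the monotone inverse $s_1(h)$. The step you single out as the real obstacle is exactly the one the paper settles in its preliminary remark that the parametrization~(13) of the discriminant curve~(10) makes $f_0(h)$ correspond to $\delta_1$, so no genuinely different argument is involved.
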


Необходимо отметить, что верхняя граница для $g$ не является образом
точек сильного падения ранга $J$, но она также и не принадлежит
$\partial \mathfrak{S}^{(2)}$, так как через эти точки имеется
гладкое продолжение $\Sigma_1 \cap \Sigma$ на поверхность
$\Sigma_2$. Наличие этой границы порождено лишь условным делением
$\widetilde \Sigma$ на $\Sigma_i$.

Обратимся к поверхности $\Sigma_2$. Как и в предыдущем случае, здесь
возникает некоторая естественная граница допустимой области, не
связанная с бифуркациями внутри $c(J)$, а объясняющаяся возможностью
гладкого продолжения $\Sigma_2 \cap \Sigma$ на поверхность
$\Sigma_4$. Множество точек касания поверхностей $\Sigma_2$ и
$\Sigma_4$ описывается уравнением ${2p^2 (p^2h-2g)^2 - 2h (p^2h
-2g)r^4 + r^8 =0}$, откуда вытекают зависимости
$$
\displaystyle{g_{\pm}(h)= \frac{1}{4p^2}[(2p^4-r^4)h \pm
r^4\sqrt{h^2-2p^2}].}
$$
Из предложения~1 следует, что $h \geqslant -(a+b)$. Поэтому
значениями первых интегралов является лишь та часть этого множества,
где $h\geqslant \sqrt{2}p$. При этом $g_{-}(h)< g_{+}(h)$, если $h
> \sqrt{2}p$. Как показано в [9], значения $g \in
(g_{-}(h),g_{+}(h))$ на $\Sigma_2$ недопустимы. Отметим еще особые
значения $h=(3a^2+b^2)/2a,\,(a^2+3b^2)/2b$, при которых кривые
$g_{\pm}$ касаются лучей -- проекций кривых $\lambda_2, \lambda_4$
на плоскость $(h,g)$. Из теорем~1,~2 получаем следующее описание
допустимого множества.
\begin{theorem}[6]
Множество $\Sigma_2 \cap \Sigma$ лежит в полупространстве $h
\geqslant -(a+b)$ и описывается следующей совокупностью систем
неравенств
$$
\left\{ {\begin{array}{l} b^2 h -b r^2\leqslant g \leqslant a^2 h +a
r^2\\
-(a+b)\leqslant h \leqslant \sqrt{2}p
\end{array}} \right.; \quad
\left\{ {\begin{array}{l} b^2 h -b r^2 \leqslant g \leqslant g_{-}(h)\\
h \geqslant \sqrt{2}p
\end{array}} \right.; \quad
\left\{ {\begin{array}{l} g_{+}(h) \leqslant g \leqslant a^2 h +a
r^2 \\
h \geqslant \sqrt{2}p
\end{array}} \right..
$$
\end{theorem}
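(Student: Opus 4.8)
The plan is to use the fact that $\Sigma_2=\{(2g-p^2h)^2=r^4k\}$ is the graph of $k=(2g-p^2h)^2/r^4\geqslant 0$ over the $(h,g)$-plane, so that $\Sigma_2\cap\Sigma$ can be described as a plane region in the coordinates $(h,g)$ and the whole problem reduces to locating its boundary arcs. By the stratification inclusions $\partial\mathfrak{S}^{(2)}\subset\mathfrak{S}^{(0)}\cup\mathfrak{S}^{(1)}$ and $\partial\mathfrak{S}^{(1)}\subset\mathfrak{S}^{(0)}$ recorded in the first section, every boundary arc of the two-dimensional admissible set must either lie on the critical image $c(J)$ or be a self-intersection line of the sheet. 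As in the proof of Theorem~5, the critical part of the boundary is already in hand: the rank-one, $dK\neq0$ curves of Theorem~2 that sit on $\Sigma_2$, namely $\lambda_1=\{g=a^2h+ar^2\}$ and $\lambda_3=\{g=b^2h-br^2\}$, are the candidate outer boundaries, and the rank-zero points $P_j$ of Theorem~1 supply the vertices; what remains to be found are the non-critical (self-intersection) arcs.

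First I would fix the left end and the orientation of the band. A direct substitution in the formulas of Theorems~1 and~2 shows that both $\lambda_1$ and $\lambda_3$ issue from the point $P_0=J(c_0)$ at $h=-(a+b)$, which by Proposition~1 is the absolute minimum of $H$; moreover $(a^2h+ar^2)-(b^2h-br^2)=r^2(h+a+b)$ is nonnegative for $h\geqslant-(a+b)$ and vanishes only at that endpoint. This simultaneously yields the bound $h\geqslant-(a+b)$, the ordering $\lambda_3\leqslant\lambda_1$, and the closing of the band on the left. Consequently, on the interval $-(a+b)\leqslant h\leqslant\sqrt{2}\,p$, where no self-intersection can occur, the admissible set is the full band $b^2h-br^2\leqslant g\leqslant a^2h+ar^2$.

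Next I would bring in the self-intersection of $\Sigma_2$ with $\Sigma_4$. Reading the displayed quadratic $2p^2(p^2h-2g)^2-2h(p^2h-2g)r^4+r^8=0$ as an equation for $p^2h-2g$ produces the two branches $g_\pm(h)$, real precisely when $h^2\geqslant2p^2$, i.e. for $h\geqslant\sqrt{2}\,p$ (the branch $h\leqslant-\sqrt{2}\,p$ being empty since $\sqrt{2}\,p>a+b$ when $a\neq b$), and coinciding at $h=\sqrt{2}\,p$. Invoking the non-admissibility of the open interval $g\in(g_-(h),g_+(h))$ on $\Sigma_2$ established in [9], I would excise this forbidden slab. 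To guarantee that the outer boundary stays $\lambda_1,\lambda_3$, I would verify the strict nesting $b^2h-br^2<g_-(h)<g_+(h)<a^2h+ar^2$ for $h>\sqrt{2}\,p$, using the tangencies of $g_\pm$ with the interior curves $\lambda_2,\lambda_4$ at $h=(3a^2+b^2)/2a$ and $h=(a^2+3b^2)/2b$ as the checkpoints where the geometry of the hole changes. Splitting the band by the slab then leaves, for $h\geqslant\sqrt{2}\,p$, exactly the two substrips $b^2h-br^2\leqslant g\leqslant g_-(h)$ and $g_+(h)\leqslant g\leqslant a^2h+ar^2$, which together with the full band for $-(a+b)\leqslant h\leqslant\sqrt{2}\,p$ reproduce the three systems claimed.

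I expect the main obstacle to be this last verification: confirming that the forbidden slab $(g_-,g_+)$ lies strictly inside the outer band for every $h>\sqrt{2}\,p$ and that the outer boundary is never taken over by the interior curves $\lambda_2,\lambda_4$. Because $g_\pm$ carry the radical $\sqrt{h^2-2p^2}$, each comparison with the linear curves $\lambda_i$ has to be squared with care and the sign of the surviving rational factors tracked; it is exactly here that the special abscissas $(3a^2+b^2)/2a$ and $(a^2+3b^2)/2b$ earn their place, pinning the tangency points and thereby certifying that removing the slab produces precisely two substrips and no further splitting.
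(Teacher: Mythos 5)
Your proposal follows essentially the same route as the paper: the outer band is bounded by the critical curves $\lambda_1,\lambda_3$ of Theorem~2 with the left endpoint $h=-(a+b)$ coming from the rest points, and the forbidden slab $g\in(g_-(h),g_+(h))$ for $h\geqslant\sqrt{2}\,p$ is excised via the intersection of $\Sigma_2$ with $\Sigma_4$ and the non-admissibility result of~[9], with the tangencies of $g_\pm$ to $\lambda_2,\lambda_4$ at $h=(3a^2+b^2)/2a$ and $h=(a^2+3b^2)/2b$ controlling the geometry. The extra verifications you flag (the ordering $r^2(h+a+b)\geqslant0$ and the nesting of the slab inside the band) are elementary and only make explicit what the paper leaves implicit.
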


Для достаточно простого множества $\Sigma_3$ полное описание
допустимых значений дают неравенства в (12).

На поверхности $\Sigma_4$ удобнее описывать допустимое множество в
терминах переменных $s,h$ с учетом выражений (5). На кривой
$\delta_1$ имеем уже отмеченную зависимость $s_1(h), \,h \geqslant
-2b$. На кривой $\delta_2$ имеем $h'(s) < 0$. Обозначим обратную
зависимость через $s_2(h), \, h \geqslant 2b$, так что при этом
$s_2(2b)= b$ и $s_2(h)\to 0$ при $h \to +\infty$. На кривой
$\delta_3$ уравнение $h'(s)=0$ имеет единственное решение $s_0 \in
(a,+\infty)$. Пусть $h_0$ -- соответствующее минимальное значение
$h$. Очевидно, $h_0 \in (a+b,2a)$. Обозначим через $s_3(h),s_4(h)$
зависимости, определенные на кривой $\delta_3$ для $s \in [a,s_0]$ и
$s \in [s_0,+\infty)$ соответственно.

Рассмотривая интервалы монотонности $s(h)$ на кривых
$\mu_1$~--~$\mu_4$, обозначим
$$
\displaystyle{s_5(h)= \frac{h - \sqrt{h^2-4ab}}{2},}\quad
\displaystyle{s_6(h)= \frac{h + \sqrt{h^2-4ab}}{2},}\quad
\displaystyle{s_7(h)= \frac{h + \sqrt{h^2+4ab}}{2}}.
$$

Суммируя утверждения теорем~1~--~4 в части, относящейся к значениям
$s,h$, приходим к следующему утверждению.
\begin{theorem}[7]
Допустимая область $\Sigma_4 \cap \Sigma$ полностью описывается
следующей совокупностью условий на плоскости $(s,h)$. Для
отрицательных значений $s$:
$$
\left\{ {\begin{array}{l} -(a+b)\leqslant h \leqslant -2 \sqrt{ab}\\
s \in [-a,s_5(h)] \cup [s_6(h),-b]
\end{array}} \right.; \quad
\left\{ {\begin{array}{l} -2 \sqrt{ab}\leqslant h \leqslant -2b\\
s \in [-a,-b]
\end{array}} \right.; \quad
\left\{ {\begin{array}{l} h >  -2b\\
s \in [-a,s_1(h)]
\end{array}} \right..
$$

Для положительных значений $s$:
$$
\begin{array}{ll}
\left\{ {\begin{array}{l} -a+b\leqslant h \leqslant 2b\\
s \in [b,s_7(h)]\end{array}} \right.; &
\left\{ {\begin{array}{l} 2b \leqslant h \leqslant h_0\\
s \in [s_2(h),s_7(h)]
\end{array}} \right.; \\[4mm]
\left\{ {\begin{array}{l} h_0 \leqslant h \leqslant 2a\\
s \in [s_2(h),s_3(h)]\cup[s_4(h),s_7(h)]
\end{array}} \right.; &
\left\{ {\begin{array}{l} h > 2a\\
s \in [s_2(h),a]\cup[s_4(h),s_7(h)]
\end{array}} \right..
\end{array}
$$
\end{theorem}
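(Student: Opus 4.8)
The plan is to realise $\Sigma_4\cap\Sigma$ as a two-dimensional stratified region inside the $(s,h)$-parametrization $(5)$ of $\Sigma_4$, whose boundary consists exactly of the rank-one critical curves already located in Theorems 1--4. Recall $\Sigma\subset\widetilde\Sigma$ and $\partial\mathfrak{S}^{(2)}\subset\mathfrak{S}^{(0)}\cup\mathfrak{S}^{(1)}$, so on $\Sigma_4$ the closure of the rank-two sheet of $\Sigma$ is cut out precisely by the rank-$\leqslant 1$ images. The first step is therefore to collect every rank-$\leqslant 1$ critical curve landing on $\Sigma_4$ and write it in $(s,h)$: by Theorem 2 these are the four vertical lines $s=-a,\;s=-b,\;s=b,\;s=a$ (the curves $\lambda_1,\lambda_3,\lambda_4,\lambda_2$), by Theorem 3 the $\mu$-curves $h=s\pm ab/s$, and by Theorem 4 the $\delta$-curves $h=2s\pm\phi(s)$; the rank-zero points $Q_{ij}$ of Theorem 1 occur as their common endpoints and corners.

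Next I would invert the $\mu$- and $\delta$-relations to produce the monotone branch functions of the statement. Solving $s^2-hs+ab=0$ and $s^2-hs-ab=0$ yields $s_5,s_6$ (the two roots living in $[-a,-b]$ for $-(a+b)\leqslant h\leqslant -2\sqrt{ab}$, coalescing at $s=-\sqrt{ab}$, $h=-2\sqrt{ab}$, and satisfying $s_5=-a$, $s_6=-b$ at $h=-(a+b)$) and $s_7$ (the positive $\mu_2$-root, running from $s_7=b$ at $h=b-a$ to $+\infty$). For the $\delta$-curves I use the monotonicity already recorded before the theorem: $s_1(h)$ inverts $\delta_1$ on $[-b,0)$ for $h\geqslant -2b$; $s_2(h)$ inverts $\delta_2$ on $(0,b]$ for $h\geqslant 2b$ with $s_2(2b)=b$; and $s_3(h),s_4(h)$ are the two branches of $\delta_3$ meeting at the interior critical point $s_0$, defined for $h\geqslant h_0$ with $h_0\in(a+b,2a)$. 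These curves partition the half-strip $\{s<0\}$ and the half-strip $\{s>0\}$ of $\Sigma_4$ into finitely many cells.

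The core of the proof is then to decide, cell by cell, which ones belong to $\Sigma$, using the separating-root criterion: a point of $\Sigma_4$ lies in $\Sigma$ exactly when its double root $s$ of $\psi$ is a separating root of the corresponding separation equation, i.e. $s$ bounds the oscillation interval of the separation variable for an actual motion. Since the admissible region is bounded by precisely the enumerated rank-one curves, it suffices to test one representative $(s,h)$ in each cell (or argue by continuity from a region already known to be filled, e.g. the portion adjoining $\Sigma_1\cap\Sigma$ of Theorem 5, which meets $\Sigma_4$ exactly along $\delta_1$), and then propagate the answer through the monotonicity and the known values of the $s_i$ at the transition levels $h=-2\sqrt{ab},\,-2b,\,b-a,\,2b,\,h_0,\,2a$. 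Reading off the resulting intervals of $s$ for each range of $h$ --- left-bounded by $s=-a$ (respectively by $s=b$ or by $s_2$), right-bounded or slit by the $\mu$- and $\delta$-branches --- then gives exactly the seven cases, with the slits $(s_5,s_6)$ and $(s_3,s_4)$ carved out by the $\mu_3$- and $\delta_3$-arcs and the right endpoint $s=a$ appearing once $h$ exceeds $2a$ (where the $s_3$-branch of $\delta_3$ ceases to exist and is replaced by $\lambda_2$).

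The main obstacle I anticipate is the bookkeeping of this cell-by-cell assembly: correctly tracking which side of each $\mu$- and $\delta$-arc is admissible, verifying the relative ordering and monotonicity of $s_1,\dots,s_7$ across the six transition levels, and checking that the cells close up consistently at the corner points $Q_{ij}$ so that no spurious interval is included or omitted. The analytic input --- the inequality $p^2h-2g\geqslant 0$ coming from $(9)$ on the $\mathfrak{M}$-boundary, the vanishing of $\phi$ with vertical tangent at $s=\pm a,\pm b$, and the single interior critical point $s_0$ of $\delta_3$ --- is routine; the delicacy is purely in matching the combinatorics of the boundary curves to the seven-case tabulation.
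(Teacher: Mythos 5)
Your proposal coincides with the paper's argument: the paper justifies Theorem 7 only by the remark that one collects the curves of Theorems 1--4 written in the $(s,h)$ coordinates together with the monotone branches $s_1,\dots,s_7$ introduced just before the statement, relying on the general principle $\partial\mathfrak{S}^{(2)}\subset\mathfrak{S}^{(0)}\cup\mathfrak{S}^{(1)}$ and on tracking the number of preimages over the resulting cells. Your cell-by-cell assembly, with admissibility decided at a representative point of each cell and propagated through the monotonicity of $s_1,\dots,s_7$ across the transition levels, is exactly this argument written out in more detail, so it is correct and takes essentially the same route.
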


Теоремы~5~--~7 дают полную информацию для построения средствами
компьютерной графики сечений бифуркационной диаграммы $\Sigma (J)$
плоскостями постоянной энергии.

\Section{Диаграммы на изоэнергетических уровнях} Условия на
постоянные первых интегралов в теоремах~1-7 записаны в терминах
конкретных значений постоянной энергии $h$ или промежутков ее
изменения и содержат параметры $a, b$. Равенства (теорема~1) и
граничные значения $h$ в неравенствах (теоремы~2-7) дают те значения
энергии, при переходе через которые в диаграмме $\Sigma^h$
происходят структурные изменения. Выписав эти значения, получаем
уравнения разделяющих поверхностей в пространстве ${\bf
R}^3(h,a,b)$. Кроме того, разделяющие поверхности возникают при
записи условий существования точек касания листов $\Sigma_i$ в
допустимой области. Всего таких поверхностей оказывается 13.
Обозначим их через $Q_1 - Q_{13}$:
\begin{equation}
\begin{array}{l}
 Q_1-Q_4 :\, h =  \pm a \pm b;  \qquad  Q_5-Q_6:\, h = \pm 2\sqrt {ab};\\
 \displaystyle{Q_7 :\,h = \sqrt {2(a^2+b^2 )};\qquad Q_8 :\,h = \frac{1} {2a}(3a^2 + b^2 ); \qquad Q_9 :\, h = \frac{1} {{2 b}}(a^2 + 3b^2)};\\
 \displaystyle{ Q_{10} :\, \left\{
 \begin{array}{l}
 \displaystyle{h = s(3 - \frac {s^2}{a^2} ) - \frac {1}{a^2}\sqrt {(s^2  - a^2)^3 }}
 \\[3mm]
 \displaystyle{ b  = \frac {s}{a}\sqrt {s [s(3 - \frac{2s^2}{a^2}) - \frac{2}{a^2}\sqrt {(s^2 -
 a^2)^3 } ]}}
 \end{array} \right.,\quad s \in [a,\frac{2}{\sqrt 3 }a];}\\
 Q_{11} :\, h = 2a;  \qquad Q_{12}-Q_{13} :\, h = \pm 2b .  \\
\end{array}
\end{equation}

\begin{figure}[h]
\centering
\includegraphics[width=\textwidth,keepaspectratio]{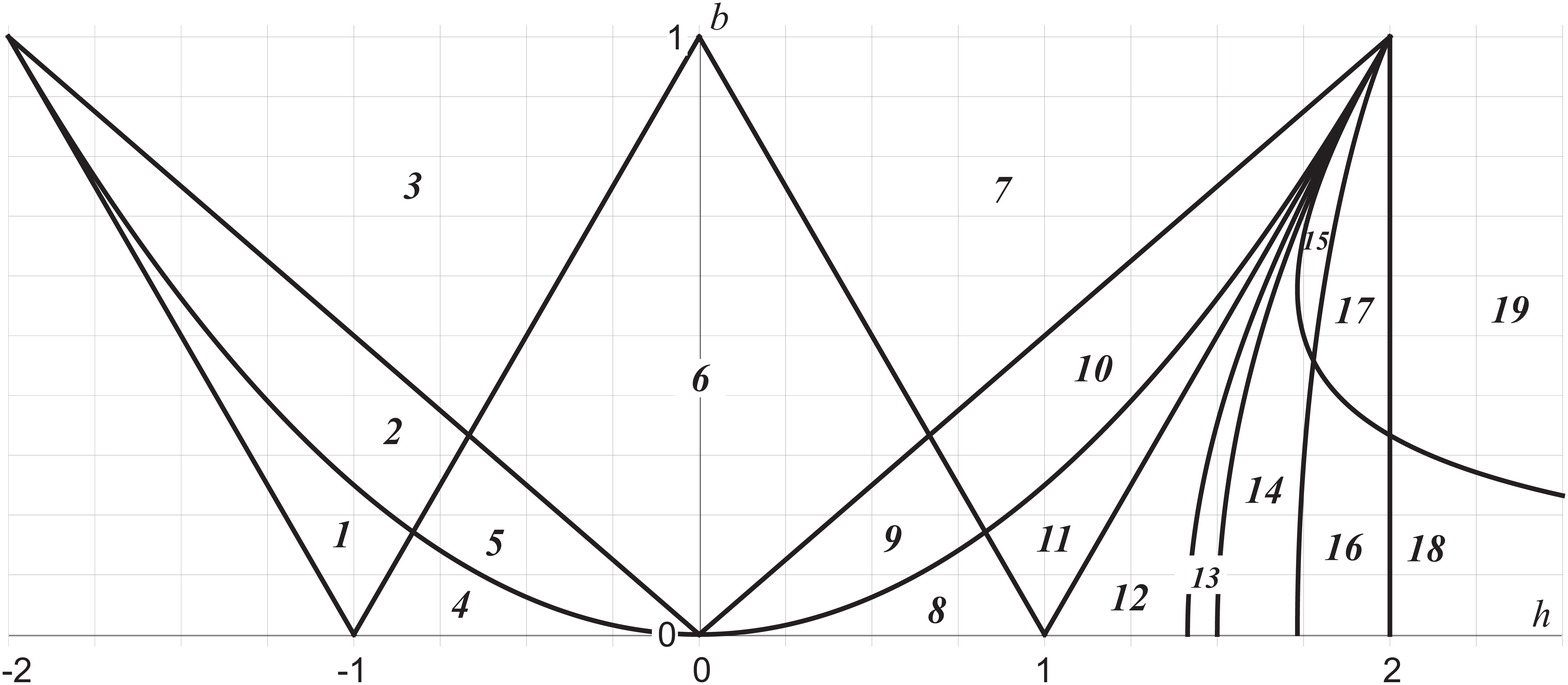}
\caption{Разделяющее множество в плоскости $(h,b)$.}
\end{figure}

\begin{figure}[ph]
\centering
\includegraphics[scale=1,clip]{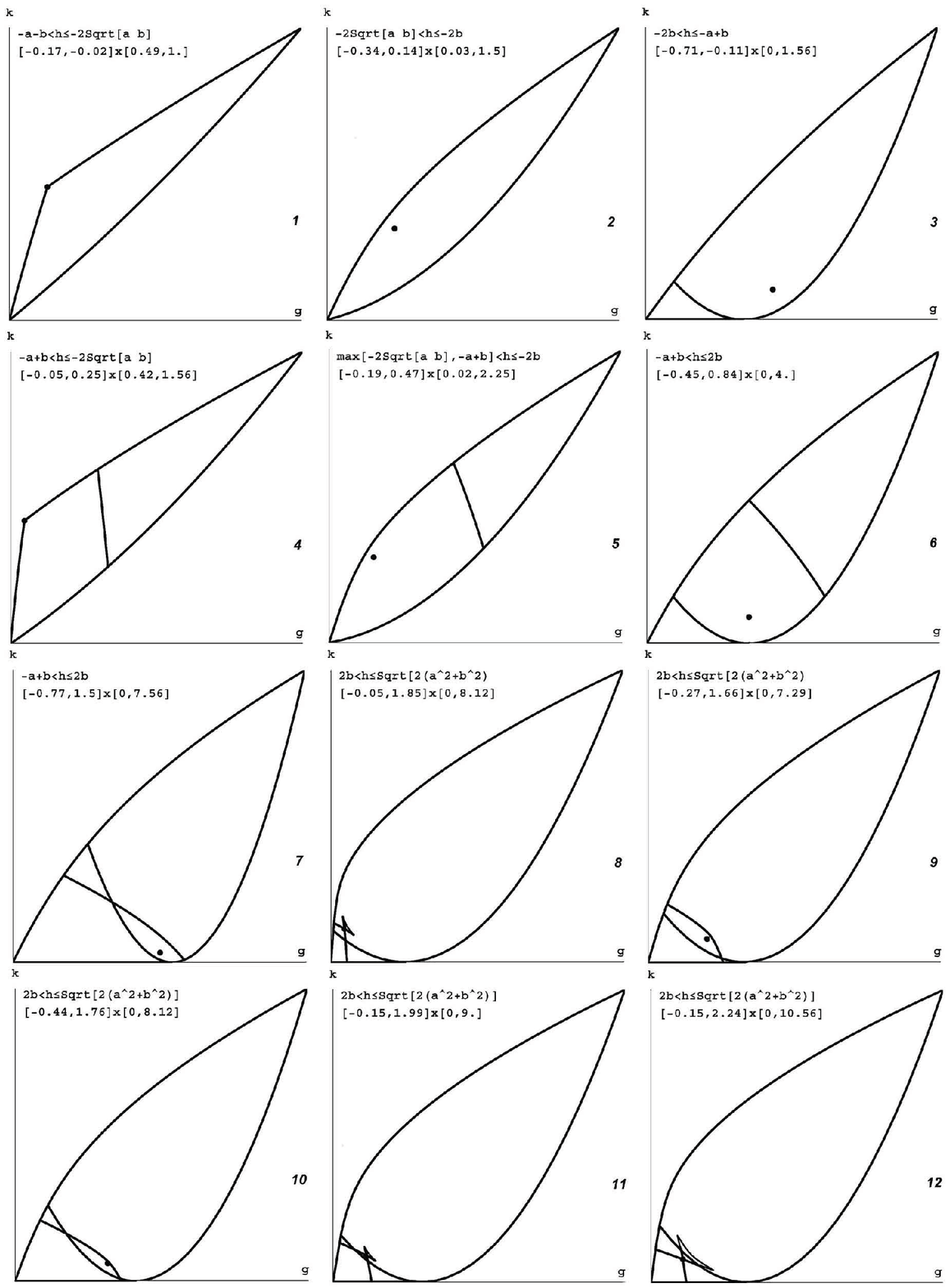}
\caption{Бифуркационные диаграммы для областей 1-12.}
\end{figure}

Заметим, что все сечения этих поверхностей плоскостями
фиксированного $a$ переводятся в сечение $a=1$ преобразованием
подобия (подстановка $h \to ah, b \to ab$, а для параметрически
заданной кривой $Q_{10}$ еще и $s \to as$). Это очевидное следствие
того факта, что величина $a$ может быть взята за единицу измерения
длин характеристических векторов силовых полей
$\boldsymbol{\alpha},\boldsymbol{\beta}$. Таким образом, без
ограничения общности полагаем $a=1$. Тогда уравнения (14) определяют
семейство разделяющих кривых в плоскости $(h,b)$. При этом $0
\leqslant b \leqslant 1$ и $h \geqslant -1-b$. Поэтому имеется ровно
19 областей, в которых диаграммы $\Sigma^h$ непусты и устойчивы
относительно малых изменений параметров (рис.~1), что совпадает с
результатом работы~[14], где уравнения разделяющих кривых получены
исходя из чисто геометрического рассмотрения особенностей плоских
сечений поверхности $\widetilde \Sigma$.

\begin{figure}[h]
\centering
\includegraphics[scale=1,clip]{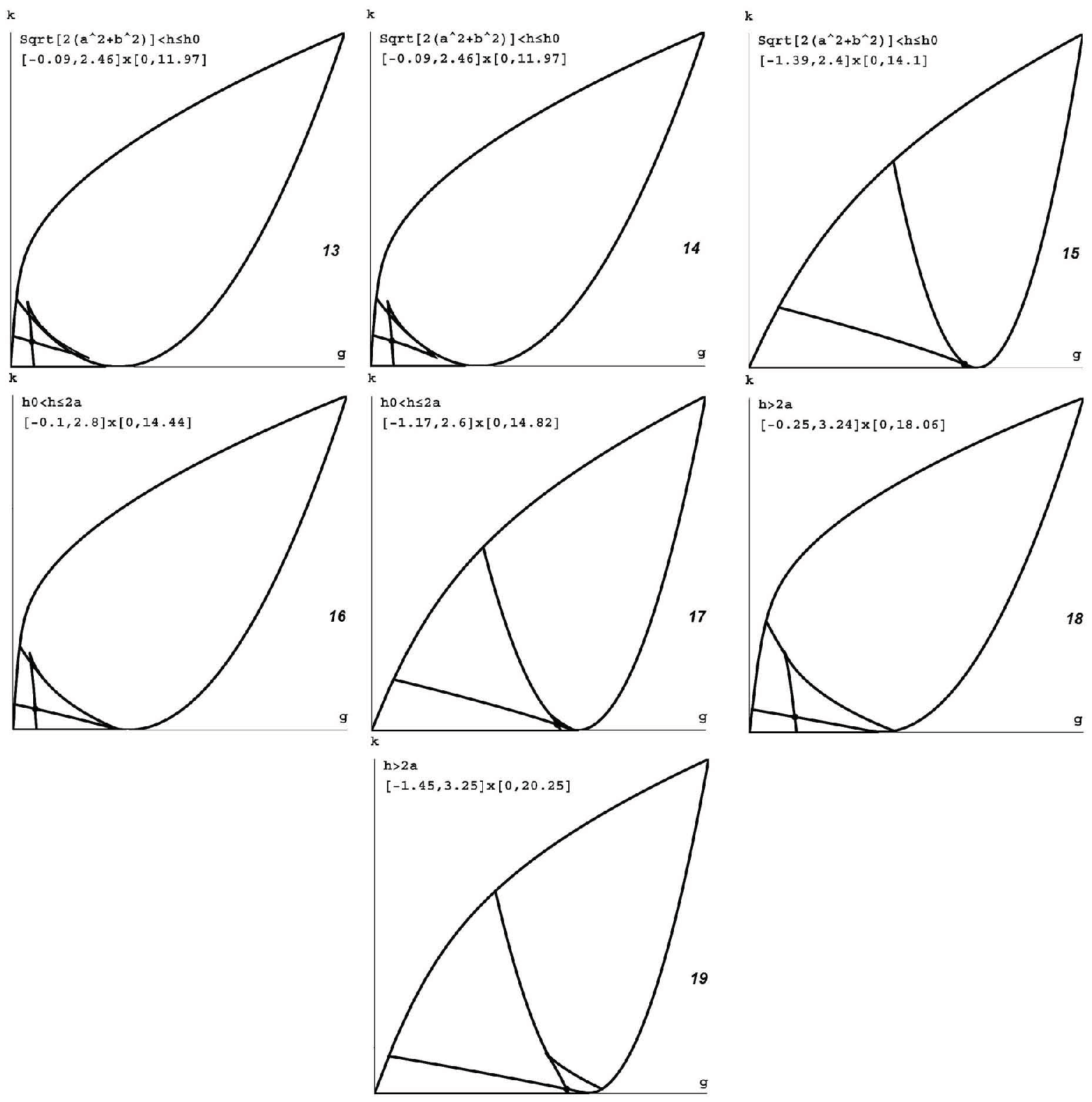}
\caption{Бифуркационные диаграммы для областей 12-19.}
\end{figure}

Полный набор типичных диаграмм показан на рис.~2,~3. Минимальные
прямоугольники, обрамляющие диаграммы $\Sigma^h$, масштабированы к
одному размеру изображения (фактические границы для $g$ и $k$
выведены на каждый рисунок). Отдельные элементы диаграмм для
областей 15-19 при показе всей диаграммы неразличимы. Однако легко
построить изображение необходимых участков. Кроме того, в работе
[14] описаны все явления, происходящие с особыми точками диаграмм на
разделяющем множестве.

\end{document}